\documentclass[reprint, superscriptaddress, amsmath, amssymb, aps, twocolumn, floatfix]{revtex4-1}

\usepackage[utf8]{inputenc}
\usepackage[T1]{fontenc}
\usepackage{hyperref}
\usepackage{graphicx}
\usepackage[caption=false]{subfig}
\usepackage{amsthm}
\usepackage{bm}
\usepackage{tensor}
\usepackage{diagbox}
\usepackage{booktabs}
\usepackage{xcolor}

\newcommand{\sket}[1]{|{#1}\rangle}

\newcommand{\sbraket}[2]{\langle{#1}|{#2}\rangle}

\newcommand{\Dff}{\mathrm{d}}
\newcommand{\opnorm}[1]{\left\|#1\right\|_{\mathrm{op}}}
\newcommand{\HSnorm}[1]{\left\|#1\right\|_{\mathrm{HS}}}
\newcommand{\barth}{\bar{\theta}}

\newtheorem{thm}{Theorem}

\newtheorem{cor}{Corollary}
\theoremstyle{definition}
\newtheorem{definition}{Definition}

\begin{document}

\title{Confidence uncertainty: position and momentum can be jointly determined
with a guaranteed probability}

\author{Jia-Yi Lin}
    \affiliation{National Laboratory of Solid State Microstructures and School of Physics,  Collaborative Innovation Center of Advanced Microstructures, Nanjing University, Nanjing 210093, China.}
\author{Xin-Yu Li}
    \affiliation{Institute for Brain Sciences and Kuang Yaming Honors School, Nanjing University, Nanjing 210023, China.}
\author{Wei Wang}
    \email{wangwei@nju.edu.cn}
    \affiliation{National Laboratory of Solid State Microstructures and School of Physics,  Collaborative Innovation Center of Advanced Microstructures, Nanjing University, Nanjing 210093, China.}
    \affiliation{Institute for Brain Sciences and Kuang Yaming Honors School, Nanjing University, Nanjing 210023, China.}
\author{Shengjun Wu}
    \email{sjwu@nju.edu.cn}
    \affiliation{National Laboratory of Solid State Microstructures and School of Physics,  Collaborative Innovation Center of Advanced Microstructures, Nanjing University, Nanjing 210093, China.}
    \affiliation{Institute for Brain Sciences and Kuang Yaming Honors School, Nanjing University, Nanjing 210023, China.}

\date{\today}

\begin{abstract}
Standard-deviation and entropic formulations of uncertainty principle
capture the spread of the probability distribution but
say little about the probability itself contained in a small region. We
introduce the \emph{confidence uncertainty} $\Delta^{c}x(\theta_x)$ as the
minimal Lebesgue measure of the support set in which the particle is found
with probability at least $\theta_x$, and the companion \emph{interval
confidence uncertainty} $\Delta^{I}x(\theta_x)$ which restricts the support to
a single interval. We prove two complementary uncertainty inequalities. (i)
For $\theta_x+\theta_p\le 1$ both confidence uncertainties can be made
arbitrarily small simultaneously, so that no nontrivial product bound holds;
in particular, position and momentum can be \emph{jointly localised with
probability at least~$50\%$}. (ii) For $\theta_x+\theta_p>1$ a lower
bound holds: combining Lenard's projection inequality with the Donoho--Stark
operator-norm bound we obtain $\Delta^{c}x\,\Delta^{c}p\geq
2\pi\hbar\bigl(\sqrt{\theta_x\theta_p}-\sqrt{(1-\theta_x)(1-\theta_p)}\bigr)^{\!2}$,
and for the interval version we obtain the sharp implicit Landau--Pollak bound
$\Delta^{I}x\,\Delta^{I}p\geq 4\hbar\,\lambda_{0}^{-1}\!\bigl((\sqrt{\theta_x\theta_p}-\sqrt{(1-\theta_x)(1-\theta_p)})^{2}\bigr)$,
where $\lambda_{0}(c)$ is the largest prolate-spheroidal eigenvalue. We
support the analytical bounds with numerical evaluation of $\lambda_{0}(c)$,
provide closed-form small-$c$ and large-$c$ asymptotics, compute the optimal
Slepian-superposition states that saturate the interval bound, and compare
the resulting product against the variance Heisenberg--Kennard, the
Bia\l{}ynicki-Birula--Mycielski entropic, and the Donoho--Stark
concentration bounds. The unified picture provides a complete phase diagram on
$(\theta_x,\theta_p)\in[0,1]^{2}$.
\end{abstract}

\maketitle

\section{Introduction}\label{sec:intro}

Quantum mechanics is the cornerstone of modern physics, and the uncertainty
principle is one of its most counter-intuitive features. The original
formulation by Heisenberg~\cite{Heisenberg1927} and the rigorous proof by
Kennard~\cite{Kennard1927} state that the standard deviations of the
position and momentum of a particle satisfy
\begin{equation}
\Delta x\cdot\Delta p\geq\frac{\hbar}{2}.\label{eq:HK}
\end{equation}
Robertson~\cite{Robertson1929} and Schr\"odinger~\cite{Schrodinger1930}
extended this to arbitrary observables, and a vast literature has since
explored the implications of \eqref{eq:HK} for entanglement
detection~\cite{Guhne2004,HofmannTakeuchi2003}, quantum
nonlocality~\cite{Oppenheim2010,LiDuQiao2014},
and improved bounds for mixed states~\cite{Park2005,Heydari2015,MacconePati2014}.

The variance-based formulation has well-known limitations. A symmetric
bimodal distribution with two narrow lobes carries a large $\Delta x$ and yet
the particle is sharply localised inside each lobe. From an
information-theoretic standpoint the relevant quantity is not the spread of
the distribution but the probability concentrated in a small set, and this
motivated the entropic uncertainty relations: Hirschman--Beckner~\cite{Hirschman1957,Beckner1975},
Bia\l{}ynicki-Birula and Mycielski~\cite{BialynickiBirula1975}, Deutsch~\cite{Deutsch1983},
Maassen and Uffink~\cite{Maassen1988}, Wu-Yu-Mølmer~\cite{Wu2009}, and the comprehensive recent review of
Coles \emph{et al.}~\cite{Coles2017}.
Uncertainty equality with quantum memory~\cite{Wang2019} was found, and various extensions of entropic uncertainty relation
were given very recently~\cite{Huang2021,Huang2024R,Zhao2024,Huang2024A}.
Entropic uncertainty has found use in
quantum cryptography~\cite{Berta2010} and in various entropic
bounds~\cite{Renyi1961,Pegg1998,Wehner2010,Dodonov2015,Zhou2016,Zhang2016,Rastegin2019}.

Even entropic uncertainty does not directly answer the operational question:
\emph{within how short an interval can the position be located with
probability at least $\theta_x$, while the momentum is simultaneously
located within probability at least $\theta_p$?} For
$\theta_x=\theta_p=1$ this question is the classical Paley--Wiener obstacle
(a function compactly supported in both position and momentum vanishes), but
the partial-confidence regime $\theta_x,\theta_p<1$ is more nuanced. A close
classical analogue exists in signal processing through the Slepian--Pollak--Landau
theory of prolate-spheroidal wave functions and time-frequency
concentration~\cite{SlepianPollak1961,LandauPollak1961,LandauPollak1962,Slepian1983}
and in the Donoho--Stark uncertainty principle for measurable
sets~\cite{DonohoStark1989,Lenard1972,FollandSitaram1997,Nazarov2008}.
Our paper translates these tools to the quantum-mechanical position--momentum
setting, defines the corresponding \emph{confidence
uncertainty}, and gives lower bounds across the full parameter range.

\paragraph*{Contributions.}
\begin{enumerate}\itemsep0pt
\item \emph{Definitions} (Sec.~\ref{sec:defs}): the confidence uncertainty
  $\Delta^{c}x(\theta_x)$ and the interval confidence uncertainty
  $\Delta^{I}x(\theta_x)$.
\item \emph{Theorem~\ref{thm:half}: a $50\%$ joint-localisation result.}
  When $\theta_x+\theta_p\le 1$ no positive lower bound exists on
  $\Delta^{c}x\,\Delta^{c}p$. In particular, position and momentum can be
  simultaneously localised to arbitrary precision with confidence
  $\theta_x=\theta_p=\tfrac12$.
\item \emph{Theorem~\ref{thm:LP}: tight Landau--Pollak lower bound.} For
  $\theta_x+\theta_p>1$,
  $\Delta^{c}x\,\Delta^{c}p\ge 2\pi\hbar(\sqrt{\theta_x\theta_p}-\sqrt{(1-\theta_x)(1-\theta_p)})^{2}$,
  and for the interval version the bound is implicit through the largest
  prolate-spheroidal eigenvalue $\lambda_0(c)$. The interval bound is sharp
  (saturated by Slepian-superposition states).
\item \emph{Numerical landscape} (Sec.~\ref{sec:numerics}): explicit
  evaluation of $\lambda_0(c)$, asymptotic forms, the saturating wave
  functions, and numerical maps of the lower bound on $(\theta_x,\theta_p)\in[0,1]^{2}$.
\item \emph{Comparison with established uncertainty principles}
  (Sec.~\ref{sec:compare}): we exhibit the regime in which our bound is
  tighter than the variance Heisenberg--Kennard, the
  Bia\l{}ynicki-Birula--Mycielski entropic, and the Donoho--Stark sum-form
  bound, and we show that minimum-variance Gaussian states are not optimal
  for confidence uncertainty.
\end{enumerate}

\paragraph*{Roadmap.}
Sections \ref{sec:defs}--\ref{sec:interval} formulate and prove our two
uncertainty relations. Section~\ref{sec:numerics} develops the numerical and
asymptotic picture. Section~\ref{sec:compare} situates the new bounds within
the existing uncertainty-relation literature. Section~\ref{sec:conclu}
concludes. Appendix~\ref{app:proofs} collects the proofs.

\section{Confidence uncertainty}\label{sec:defs}

For a particle in a pure state $\sket{\psi}$, the probability of finding the
particle in a position interval $[a,b]$ equals
$\int_a^b|\sbraket{x}{\psi}|^2\,\Dff x$. We are interested in the
\emph{minimal} support of position values that captures at least a chosen
fraction~$\theta_x$ of the total probability.

\begin{definition}[Confidence uncertainty]
The \emph{confidence uncertainty} $\Delta^{c}x(\theta_x)$ of position with
confidence level $\theta_x\in[0,1]$ is
\begin{equation}
\Delta^{c}x(\theta_x)=\inf\!\left\{\mu(X):\int_{X}|\sbraket{x}{\psi}|^{2}\,\Dff x\ge\theta_x\right\},
\label{eq:def_cx}
\end{equation}
where $X\subseteq\mathbb{R}$ ranges over Lebesgue-measurable sets and
$\mu(\cdot)$ is the Lebesgue measure. It is also written as
$\Delta^{c}x(\theta_x,\sket{\psi})$ when the dependence on the state must be
explicit. The momentum confidence uncertainty
$\Delta^{c}p(\theta_p,\sket{\psi})$ is defined analogously through the
momentum-space wave function
$\sbraket{p}{\psi}=\frac{1}{\sqrt{2\pi\hbar}}\int e^{-ipx/\hbar}\sbraket{x}{\psi}\,\Dff x$.
\end{definition}

\begin{definition}[Interval confidence uncertainty]
The \emph{interval confidence uncertainty}
$\Delta^{I}x(\theta_x)$ restricts the admissible support to a single
interval:
\begin{equation}
\Delta^{I}x(\theta_x)=\inf\!\left\{x_2-x_1:\int_{x_1}^{x_2}|\sbraket{x}{\psi}|^{2}\,\Dff x\ge\theta_x\right\}.
\label{eq:def_Ix}
\end{equation}
The momentum version $\Delta^{I}p(\theta_p)$ is analogous.
\end{definition}

By construction $\Delta^{c}x\le\Delta^{I}x$ and both are non-decreasing in
$\theta_x$. We seek inequalities of the form
\begin{equation}
\Delta^{\bullet}x(\theta_x)\cdot\Delta^{\bullet}p(\theta_p)\ge g\,\hbar,
\qquad \bullet\in\{c,I\},\label{eq:cur}
\end{equation}
with a non-negative constant $g=g(\theta_x,\theta_p)$ that is independent of
$\sket{\psi}$.

\section{Two regimes for confidence uncertainty}\label{sec:reg}

The $(\theta_x,\theta_p)$ plane splits into two regions in which the answer
to \eqref{eq:cur} is qualitatively different.
Figure~\ref{fig:phase} summarises the situation, which we now establish.

\begin{figure}[!htbp]
\centering
\includegraphics[width=0.9\columnwidth]{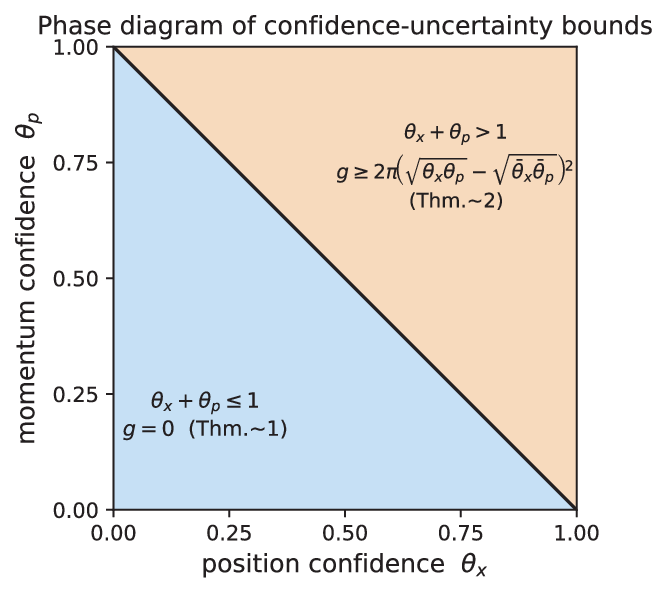}
\caption{Phase diagram for the confidence-uncertainty bound in the
$(\theta_x,\theta_p)$ plane. In the lower triangle $\theta_x+\theta_p\le 1$
no positive lower bound on $\Delta^{c}x\,\Delta^{c}p$ exists
(Theorem~\ref{thm:half}); in particular both uncertainties can be made
arbitrarily small simultaneously. In the upper triangle
$\theta_x+\theta_p>1$ Theorem~\ref{thm:LP} furnishes the strict positive
bound $g\ge 2\pi(\sqrt{\theta_x\theta_p}-\sqrt{\barth_x\barth_p})^{2}$ with
$\barth_\bullet\equiv 1-\theta_\bullet$.}
\label{fig:phase}
\end{figure}

\subsection{Below the diagonal: arbitrary joint localisation}\label{sec:below}

\begin{thm}\label{thm:half}
If $\theta_x+\theta_p\le 1$, then for every $\varepsilon>0$ there exists a
quantum state $\sket{\psi}$ such that
$\Delta^{c}x(\theta_x,\sket{\psi})<\varepsilon$ and
$\Delta^{c}p(\theta_p,\sket{\psi})<\varepsilon$ simultaneously. Equivalently
$g(\theta_x,\theta_p)=0$ in \eqref{eq:cur}. In particular,
$\theta_x=\theta_p=\tfrac{1}{2}$ is admissible: position and momentum can
be \emph{simultaneously localised to arbitrary precision with confidence at
least $50\%$}.
\end{thm}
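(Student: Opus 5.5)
We build the state as a superposition of two components with nearly orthogonal supports: one sharply localised in position, the other sharply localised in momentum. Take $\varphi_1$ to be a narrow Gaussian of width $\sigma$ in position, and $\varphi_2$ a wide Gaussian of width $1/\sigma$ in position, so it is narrow in momentum. Both are normalised. Define
\begin{equation}
\psi=\frac{\alpha\varphi_1+\beta\varphi_2}{\|\alpha\varphi_1+\beta\varphi_2\|},
\end{equation}
with $|\alpha|^2=\theta_x$ and $|\beta|^2=1-\theta_x\ge\theta_p$. The case $\theta_x=0$ or $\theta_p=0$ is trivial, since an empty set then suffices on that side. By monotonicity of $\Delta^{c}$ in $\theta$, it is enough to treat the boundary $\theta_x+\theta_p=1$ with a small slack. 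We choose $|\alpha|^2=\theta_x+\eta$ and $|\beta|^2=\theta_p+\eta$ after normalisation, where $\eta>0$ is a small margin that absorbs the error terms below.

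\textbf{Step 1: the two components are asymptotically orthogonal.} We need $\langle\varphi_1,\varphi_2\rangle\to0$ as $\sigma\to0$. For Gaussians $\langle\varphi_1,\varphi_2\rangle=\sqrt{2\sigma\cdot\sigma^{-1}/(\sigma^2+\sigma^{-2})}\sim\sqrt2\,\sigma$, so the normalisation factor tends to $1$.

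\textbf{Step 2: position side.} Take $X=[-\sqrt\sigma,\sqrt\sigma]$, which has measure $2\sqrt\sigma\to0$. On $X$, $\varphi_1$ carries mass tending to $1$. The wide component satisfies $\int_X|\varphi_2|^2\le\mu(X)\sup|\varphi_2|^2\sim 2\sqrt\sigma\cdot\sigma\to0$. Expand $\int_X|\psi|^2$ into the two diagonal terms and the cross term. By Cauchy--Schwarz the cross term is bounded by $2|\alpha\beta|\,\|\varphi_1\|_{L^2(X)}\|\varphi_2\|_{L^2(X)}\to0$. Hence $\int_X|\psi|^2\to|\alpha|^2=\theta_x+\eta>\theta_x$ for small $\sigma$.

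\textbf{Step 3: momentum side.} The argument is symmetric. The Fourier transform exchanges the roles of the two components, with widths $\hbar/\sigma$ and $\hbar\sigma$. Choose $P=[-\hbar\sqrt\sigma,\hbar\sqrt\sigma]$ and obtain momentum mass tending to $|\beta|^2>\theta_p$. Both measures are below $\varepsilon$ once $\sigma$ is small enough. For $\theta_x=\theta_p=\tfrac12$ the margin forces us to take $|\alpha|^2=|\beta|^2=\tfrac12$ exactly, because we cannot add $\eta$ on both sides.

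The main obstacle is this exact boundary $\theta_x+\theta_p=1$, which includes the headline case $\tfrac12,\tfrac12$, where no slack $\eta$ is available. Here we use a real positive superposition, which makes the cross terms positive rather than merely small. Since $\varphi_1,\varphi_2>0$, the cross term $2\alpha\beta\int_X\varphi_1\varphi_2>0$, and $\int_X|\varphi_2|^2\ge0$. The only loss is the tail $\alpha^2\int_{X^c}\varphi_1^2$, which decays like $e^{-1/\sigma}$. The gain is the cross term, of order $\int_X\varphi_1\varphi_2\approx\sqrt{2\sigma}$, and it dominates. The normalisation $\|\alpha\varphi_1+\beta\varphi_2\|^2=1+2\alpha\beta\langle\varphi_1,\varphi_2\rangle$ must also be checked. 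It costs relative $O(\sigma)$, and the cross term contributes $\approx2\alpha\beta\langle\varphi_1,\varphi_2\rangle$ entirely inside $X$, since the overlap integrand concentrates where $\varphi_1$ lives. We therefore compare the second-order terms carefully. Gaussians in both domains are positive, so the same argument works in momentum space.

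If this comparison turns out too tight, the fallback is to enlarge $X$ slightly: taking $\mu(X)=\varepsilon/2$ fixed rather than shrinking makes $\int_X|\beta\varphi_2|^2\approx\beta^2\varepsilon\sigma/(2\sqrt\pi)$, a positive term of order $\sigma$. The tail loss $e^{-\varepsilon^2/\sigma^2}$ is then much smaller, which gives strict inequality.
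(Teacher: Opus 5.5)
Your proof is correct and has the same skeleton as the paper's: a real, positive superposition with weight $\sqrt{\theta_x}$ on a position-localised piece and $\sqrt{1-\theta_x}$ on a momentum-localised piece, widths sent to zero, and monotonicity in $\theta$ for the interior. That boundary-plus-monotonicity argument makes your slack construction in Steps 1--3 redundant. The difference is the choice of building blocks, and it changes where the work lies. The paper takes $\psi_{\mathrm{rect}}$, supported \emph{exactly} in $X=[-L/2,L/2]$, and $\psi_{\mathrm{sinc}}$, band-limited \emph{exactly} to $[-W/2,W/2]$. The only position mass outside $X$ is then $\frac{1-P}{C^{2}}\int_{X^c}|\psi_{\mathrm{sinc}}|^{2}\le\frac{1-P}{C^{2}}$, and $C\ge1$ (from $\mathrm{Si}>0$) gives mass $\ge P$ at once. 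That estimate holds exactly for every $L,W$ and every $P\in[0,1]$, with no asymptotics.

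With Gaussians, both pieces leak in both representations, so the positive interference must actually beat the tails. Your sketch stops just short of this comparison, but it takes one line. With $\alpha^{2}=\theta_x$, $\beta^{2}=1-\theta_x$ and $N^{2}=1+2\alpha\beta\langle\varphi_1,\varphi_2\rangle$, the condition $\int_X|\psi|^{2}\ge\alpha^{2}$ is equivalent to
\[
2\alpha\beta^{3}\langle\varphi_1,\varphi_2\rangle+\beta^{2}\int_X\varphi_2^{2}\;\ge\;\alpha^{2}\int_{X^c}\varphi_1^{2}+2\alpha\beta\int_{X^c}\varphi_1\varphi_2 .
\]
For $X=[-\sqrt\sigma,\sqrt\sigma]$ the left side is $\approx2\sqrt2\,\alpha\beta^{3}\sigma$ and the right side is $O(e^{-1/(2\sigma)})$. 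The factor $\beta^{2}=1-\alpha^{2}$ is what remains of the cross-term gain $2\alpha\beta\langle\varphi_1,\varphi_2\rangle$ after paying the normalisation, so the margin is positive precisely because $0<\theta_x<1$. The momentum side is the same inequality with $\alpha\leftrightarrow\beta$.

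So no fallback is needed. Yours, as stated, would not settle the question on its own: it weighs its $O(\varepsilon\sigma)$ gain only against the tail, ignoring that the normalisation cost is also $O(\sigma)$.

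One genuine limitation is that Gaussians cannot reach the corners $(\theta_x,\theta_p)=(1,0),(0,1)$. A Gaussian never carries full probability on a set of finite measure, so $\Delta^{c}x(1)=\infty$ for it. Those cases are indeed trivial, but only via a compactly supported or band-limited state, which is exactly the paper's kind of component.

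In short, your route uses fully explicit, Fourier-symmetric states and shows that exact compact support is unnecessary away from the corners. The paper's exact supports buy a non-asymptotic inequality valid uniformly on the whole antidiagonal.
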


\begin{proof}[Proof of Theorem~\ref{thm:half}]
We start with $\theta_x+\theta_p=1$ and write $\theta_x=P$, $\theta_p=1-P$.
Consider the family of states with parameters $L,W>0$ and $0\le P\le 1$,
\begin{align}
\psi(x;L,W,P)&=\frac{1}{C}\!\left[\sqrt{P}\,\psi_{\mathrm{rect}}(x;L)+\sqrt{1-P}\,\psi_{\mathrm{sinc}}(x;W)\right],
\nonumber\\
\psi_{\mathrm{rect}}(x;L)&=\frac{1}{\sqrt L}\,\mathrm{rect}(x/L),\nonumber\\
\psi_{\mathrm{sinc}}(x;W)&=\sqrt{\frac{W}{2\pi\hbar}}\,\mathrm{sinc}\!\left(\frac{Wx}{4\hbar}\right),
\label{eq:rect_sinc}
\end{align}
with normalisation
$C=\sqrt{1+4\sqrt{P(1-P)\frac{2\hbar}{\pi LW}}\,\mathrm{Si}\!\left(\frac{LW}{4\hbar}\right)}$
and $\mathrm{Si}(y)=\int_0^y\frac{\sin t}{t}\,\Dff t$. The Fourier transform
exchanges the rectangular and sinc parts:
\begin{align}
\phi(p;L,W,P)&=\frac{1}{C}\!\left[\sqrt{P}\,\phi_{\mathrm{sinc}}(p;L)+\sqrt{1-P}\,\phi_{\mathrm{rect}}(p;W)\right],
\nonumber\\
\phi_{\mathrm{sinc}}(p;L)&=\sqrt{\frac{L}{2\pi\hbar}}\,\mathrm{sinc}\!\left(\frac{Lp}{4\hbar}\right),\nonumber\\
\phi_{\mathrm{rect}}(p;W)&=\frac{1}{\sqrt W}\,\mathrm{rect}(p/W).\label{eq:rect_sinc_p}
\end{align}
The probability of $x\in[-L/2,L/2]$ for the state \eqref{eq:rect_sinc} is
\begin{align}
\int_{-L/2}^{L/2}|\psi|^{2}\Dff x
&=1-\frac{1-P}{C^{2}}\!\!\left[\int_{-\infty}^{-L/2}\!\!|\psi_{\mathrm{sinc}}|^{2}\Dff x+\int_{L/2}^{\infty}\!\!|\psi_{\mathrm{sinc}}|^{2}\Dff x\right]\nonumber\\
&> 1-\frac{1-P}{C^{2}}>1-(1-P)=P,\label{eq:probLb}
\end{align}
where the strict inequality uses $C>1$. Therefore
$\Delta^{c}x(\theta_x=P,\sket{\psi})<L$, and an analogous calculation in the
momentum representation yields
$\Delta^{c}p(\theta_p=1-P,\sket{\psi})<W$. Choosing $L,W\to0^{+}$ gives the
stated arbitrary joint localisation when $\theta_x+\theta_p=1$. The case
$\theta_x+\theta_p<1$ follows immediately because each confidence uncertainty
is monotone non-decreasing in its confidence level. The
$\theta_x=\theta_p=\tfrac12$ statement is the special case $P=\tfrac12$.
\end{proof}

\subsection{Above the diagonal: a sharp Landau--Pollak bound}\label{sec:above}

\begin{thm}[New lower bound on confidence uncertainty]\label{thm:LP}
For every quantum state and every $\theta_x,\theta_p\in(0,1]$ with
$\theta_x+\theta_p>1$,
\begin{equation}
\Delta^{c}x(\theta_x)\cdot\Delta^{c}p(\theta_p)\;\ge\;2\pi\hbar\,
\Bigl(\sqrt{\theta_x\theta_p}-\sqrt{(1-\theta_x)(1-\theta_p)}\Bigr)^{\!2}.
\label{eq:LP_bound}
\end{equation}
For the interval version, the analogous bound is sharp and implicit:
\begin{equation}
\lambda_{0}\!\left(\frac{\Delta^{I}x\cdot\Delta^{I}p}{4\hbar}\right)\ge
\Bigl(\sqrt{\theta_x\theta_p}-\sqrt{(1-\theta_x)(1-\theta_p)}\Bigr)^{\!2},
\label{eq:LP_bound_I}
\end{equation}
where $\lambda_0(c)$ is the largest eigenvalue of the prolate-spheroidal
integral operator $(K_c f)(u)=\int_{-1}^{1}\frac{\sin c(u-v)}{\pi(u-v)}f(v)\,\Dff v$
on $L^2[-1,1]$. Equivalently
\begin{equation}
\Delta^{I}x\cdot\Delta^{I}p\;\ge\;4\hbar\,\lambda_{0}^{-1}\!\Bigl((\sqrt{\theta_x\theta_p}-\sqrt{(1-\theta_x)(1-\theta_p)})^{2}\Bigr).
\label{eq:LP_bound_I_explicit}
\end{equation}
The bound \eqref{eq:LP_bound_I_explicit} is saturated by the leading Slepian
function (i.e.\ the principal eigenfunction of the prolate-spheroidal kernel).
\end{thm}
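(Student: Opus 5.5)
Fix a state $\psi$ and measurable sets $X$, $\Omega$ with $\int_X|\psi|^2\ge\theta_x$ and $\int_\Omega|\phi|^2\ge\theta_p$. We will bound $\mu(X)\mu(\Omega)$ from below; taking the infimum over admissible $X$ and $\Omega$ then gives the confidence-uncertainty bound. Let $P_X$ be multiplication by $\mathbf 1_X$ in position space and $Q_\Omega=\mathcal F^{-1}\mathbf 1_\Omega\mathcal F$ the momentum projection. By hypothesis $\|P_X\psi\|^2\ge\theta_x$ and $\|Q_\Omega\psi\|^2\ge\theta_p$.

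The first step is the geometric (Lenard/Landau--Pollak) projection inequality. For two orthogonal projections and a unit vector, let $\alpha=\arccos\|P_X\psi\|$, $\beta=\arccos\|Q_\Omega\psi\|$ and $\gamma=\arccos\|Q_\Omega P_X\|_{\mathrm{op}}$. The angle between the two subspaces satisfies $\gamma\le\alpha+\beta$. We argue this by applying the triangle inequality for the Fubini--Study angle to $\psi$, its normalised projection onto $\mathrm{ran}\,P_X$, and its normalised projection onto $\mathrm{ran}\,Q_\Omega$. The angle between the last two vectors is at least $\gamma$, because for a unit $u\in\mathrm{ran}P_X$ and a unit $v\in\mathrm{ran}Q_\Omega$ we have $|\langle u,v\rangle|=|\langle Q_\Omega P_Xu,v\rangle|\le\|Q_\Omega P_X\|$. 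When $\theta_x+\theta_p>1$ we have $\alpha+\beta<\pi/2$, so $\cos$ is decreasing on the relevant range. Using $\cos\alpha\ge\sqrt{\theta_x}$ and $\cos\beta\ge\sqrt{\theta_p}$ this gives
\[
\|Q_\Omega P_X\|_{\mathrm{op}}^2\ \ge\ \cos^2(\alpha+\beta)\ \ge\ \Bigl(\sqrt{\theta_x\theta_p}-\sqrt{(1-\theta_x)(1-\theta_p)}\Bigr)^2 .
\]
The condition $\theta_x+\theta_p>1$ is exactly what makes the right-hand side a nontrivial positive quantity, matching the dichotomy with Theorem~\ref{thm:half}.

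The second step bounds the operator norm from above. For general measurable sets we use the Donoho--Stark Hilbert--Schmidt estimate. The operator $Q_\Omega P_X$ has kernel $(2\pi\hbar)^{-1/2}\mathbf 1_\Omega(p)e^{-ipx/\hbar}\mathbf 1_X(x)$ from position to momentum space, so
\[
\|Q_\Omega P_X\|_{\mathrm{op}}^2\le\|Q_\Omega P_X\|_{\mathrm{HS}}^2=\frac{\mu(X)\mu(\Omega)}{2\pi\hbar}.
\]
Combining the two steps yields \eqref{eq:LP_bound}.

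For intervals $X=[x_1,x_2]$ and $\Omega=[p_1,p_2]$ we compute the norm exactly instead of using the Hilbert--Schmidt bound. Translations and modulations are unitary and commute appropriately with the two projections, so we may centre both intervals. Then $\|Q_\Omega P_X\|^2=\|P_XQ_\Omega P_X\|$. The operator $P_XQ_\Omega P_X$ has kernel $\sin(\Delta p(x-y)/2\hbar)/(\pi(x-y))$ on $[-\Delta x/2,\Delta x/2]$. Rescaling $x=(\Delta x/2)u$ turns this into $K_c$ with $c=\Delta x\,\Delta p/(4\hbar)$, so the norm squared equals $\lambda_0(c)$. The first step then gives \eqref{eq:LP_bound_I}. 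Since $\lambda_0$ is continuous and strictly increasing (Slepian), it can be inverted to obtain \eqref{eq:LP_bound_I_explicit}.

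For sharpness we need every inequality in the chain to be an equality. Take the leading prolate function $f_0$, which is an eigenfunction with $\|Q_\Omega P_Xf_0\|^2=\lambda_0$, and let $u=P_Xf_0/\|P_Xf_0\|$ and $v=Q_\Omega u/\|Q_\Omega u\|$. Place $\psi$ on the geodesic in the real plane spanned by $u$ and $v$, at angle $\alpha$ from $u$ and $\beta$ from $v$. This plane is invariant under both projections in the required sense, so $\|P_X\psi\|=\cos\alpha$ and $\|Q_\Omega\psi\|=\cos\beta$. This is the Slepian-superposition state. Its interval uncertainties must then be checked to equal $\Delta x$ and $\Delta p$, i.e.\ that no shorter interval carries the same mass. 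I expect this minimality to be the main obstacle; I would handle it by interpreting saturation as attaining the infimum of the bound over the sets $X$ and $\Omega$, rather than by proving minimality interval by interval. The measurability and infimum issues in the definitions are routine.
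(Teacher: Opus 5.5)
Your proposal is correct and follows the paper's proof. The steps are the same: Lenard's angle inequality (which you derive from the Fubini--Study triangle inequality instead of citing it), the Donoho--Stark Hilbert--Schmidt bound for general sets, the Slepian--Pollak identity for intervals (obtained by rescaling the kernel to $K_c$), and saturation by a superposition in the invariant plane spanned by the leading prolate function and its momentum projection.

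The minimality you flag as the main obstacle is in fact immediate, so you need neither a reinterpretation of ``saturation'' nor an interval-by-interval argument. Your state satisfies $\Delta^{I}x\le|X|$ and $\Delta^{I}p\le|\Omega|$. The bound you just proved forces $\Delta^{I}x\,\Delta^{I}p\ge 4\hbar\,\lambda_0^{-1}\bigl(\cos^2(\alpha+\beta)\bigr)=|X|\,|\Omega|$. Hence both inequalities are equalities.
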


The proof, which combines Lenard's projection
inequality~\cite{Lenard1972} with the Donoho--Stark operator-norm
bound~\cite{DonohoStark1989}, is given in Appendix~\ref{app:LP}.

\begin{cor}\label{cor:logdiv}
For fixed $\theta_x=1$, the interval bound \eqref{eq:LP_bound_I_explicit}
diverges logarithmically as $\theta_p\to1$:
\begin{equation}
\Delta^{I}x\cdot\Delta^{I}p\;\sim\;-2\hbar\ln(1-\theta_p),\qquad\theta_p\to 1^{-}.
\label{eq:asymp_log}
\end{equation}
\end{cor}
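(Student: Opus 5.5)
Setting $\theta_x=1$ in \eqref{eq:LP_bound_I_explicit} makes the argument of $\lambda_0^{-1}$ equal to $(\sqrt{\theta_p}-0)^2=\theta_p$. The corollary therefore reduces to the asymptotics of the inverse function $\lambda_0^{-1}(\theta_p)$ as $\theta_p\to1^-$. This is the same as the behaviour of $1-\lambda_0(c)$ as $c\to\infty$, because $\lambda_0$ is continuous and strictly increasing in $c$ with $\lambda_0(c)\to1$.

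The plan is as follows.
\begin{enumerate}
\item Invoke the classical large-$c$ asymptotics of the leading prolate eigenvalue, due to Slepian (and Landau--Widom):
\begin{equation*}
1-\lambda_0(c)\sim 4\sqrt{\pi}\,\sqrt{2c}\;e^{-2c},\qquad c\to\infty .
\end{equation*}
In the paper this should appear among the large-$c$ asymptotics of Sec.~\ref{sec:numerics}. Otherwise I would cite Slepian's WKB / Fuchs analysis.
\item Take logarithms: $\ln(1-\lambda_0(c))=-2c+\tfrac12\ln c+O(1)$, so that
\begin{equation*}
-\ln(1-\lambda_0(c))=2c\,\bigl(1+o(1)\bigr).
\end{equation*}
\item Invert. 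Set $\theta_p=\lambda_0(c)$, i.e.\ $c=\lambda_0^{-1}(\theta_p)$. Step 2 gives $2c=-\ln(1-\theta_p)+\tfrac12\ln c+O(1)$. Since $\ln c=o(c)$, this yields $c\sim-\tfrac12\ln(1-\theta_p)$ as $\theta_p\to1^-$.
\item Multiply by $4\hbar$ to get the right-hand side of \eqref{eq:LP_bound_I_explicit}:
\begin{equation*}
4\hbar\,\lambda_0^{-1}(\theta_p)\sim -2\hbar\ln(1-\theta_p).
\end{equation*}
This is \eqref{eq:asymp_log}, read as the asymptotic behaviour of the bound itself. Sharpness of the bound then shows that the optimal product has the same asymptotics.
\end{enumerate}

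The main obstacle is step 1. This is a deep classical result, and I would import it rather than reprove it. If a self-contained argument were demanded, I would only need the logarithmic rate $-\ln(1-\lambda_0(c))\sim 2c$, not the prefactor, and this can be obtained from two-sided bounds.

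\emph{Upper bound on $1-\lambda_0$.} Use a trial function in the variational characterization
\begin{equation*}
\lambda_0=\sup_f\frac{\|P_W Q_L f\|^2}{\|f\|^2}.
\end{equation*}
For instance, a Gaussian truncated to the interval and suitably scaled shows $1-\lambda_0\le e^{-2c(1-o(1))}$.

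\emph{Lower bound on $1-\lambda_0$.} Show $1-\lambda_0\ge e^{-2c(1+o(1))}$. This direction is the harder one. I would attempt it via the commuting differential operator and a Liouville--Green estimate of the eigenfunction's decay outside $[-1,1]$. This is exactly where I expect the real difficulty to lie.
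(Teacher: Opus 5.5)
Your main argument is the paper's argument. With $\theta_x=1$ the target becomes $\theta_p$; you import $1-\lambda_0(c)\sim 4\sqrt{\pi c}\,e^{-2c}$ (the standard prefactor, and your extra $\sqrt{2}$ is a harmless slip), take logarithms and invert to get $\lambda_0^{-1}(\theta_p)\sim-\tfrac12\ln(1-\theta_p)$, then multiply by $4\hbar$.

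One caution about your optional self-contained aside. A truncated Gaussian $e^{-au^2}\mathbf{1}_{[-1,1]}$ only gives $1-\lambda_0\lesssim e^{-c}$, because its out-of-band fraction is of order $e^{-2a}+e^{-c^2/(2a)}$, which is at best $e^{-c}$ (at $a=c/2$). That yields the rate $c$ rather than $2c$ and loses a factor $2$ after inversion. To reach the rate $2c$ you would need a Kaiser--Bessel-type trial function such as $I_0\bigl(c\sqrt{1-u^2}\bigr)$, whose Fourier transform $2\sinh\sqrt{c^2-\omega^2}/\sqrt{c^2-\omega^2}$ makes the in-band versus out-of-band computation explicit.
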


\noindent
This logarithmic divergence is the quantum analogue of the asymptotic decay
$1-\lambda_0(c)\sim 4\sqrt{\pi c}\,e^{-2c}$ of the leading prolate-spheroidal
eigenvalue~\cite{Slepian1983}; see Sec.~\ref{sec:numerics}.

\section{Interval uncertainty: Slepian formulation}\label{sec:interval}

The interval uncertainty admits an exact reduction to a classical eigenvalue
problem in signal-processing theory~\cite{SlepianPollak1961,LandauPollak1961}.
Without loss of generality fix the candidate intervals
$X=[-L/2,L/2]$ and $P=[-W/2,W/2]$. Denote by $P_X$ the orthogonal projector
onto $L^{2}(X)$ and by $Q_P$ the orthogonal projector onto the band-limited
subspace $\mathcal{F}^{-1}\bigl(L^{2}(P)\bigr)$. The probability of finding
the position in $X$ given a band-limited state, or symmetrically the
probability of finding the momentum in $P$ given a position-truncated state,
is governed by
$\opnorm{P_X Q_P}^{2}=\lambda_{0}\!\left(\frac{LW}{4\hbar}\right)$. This is
the Slepian--Pollak--Landau identity~\cite{SlepianPollak1961}, and the
saturating eigenstates are the prolate-spheroidal wave functions
$\psi^{\mathrm{S}}_{0}$.

The ground-state wave function of $P_X Q_P P_X$ can be
written explicitly as a Fourier series confined to $[-L/2,L/2]$,
\begin{equation}
\psi(x)=\begin{cases}\sum_{n\in\mathbb{Z}}C_n\,L^{-1/2}e^{i 2\pi n x/L} & |x|\le L/2,\\
0 & |x|>L/2,\end{cases}
\label{eq:psi_series}
\end{equation}
with momentum-space wave function
\begin{equation}
\phi(p)=\sqrt{\frac{2L}{\pi\hbar}}\sin\!\frac{Lp}{2\hbar}\sum_{n}\frac{(-1)^{n}C_{n}}{Lp/\hbar-2n\pi}.
\label{eq:phi_series}
\end{equation}
The probability $P_W$ of the momentum lying in $[-W/2,W/2]$ takes the matrix
form $P_W=\frac{1}{\pi}C^{\dagger}A\,C$ with the symmetric matrix
\begin{equation}
A_{m,n}=(-1)^{m+n}\!\!\int_{-LW/(2\hbar)}^{LW/(2\hbar)}\!\!\Dff t\,\frac{1-\cos t}{(t-2n\pi)(t-2m\pi)}.
\label{eq:Amn}
\end{equation}
Maximising over the unit-norm Fourier coefficients $C$ gives
$\max_{\sket{\psi}}P_W=\opnorm{A}/\pi$. By the Slepian--Pollak relation,
$\opnorm{A}/\pi=\lambda_{0}(LW/4\hbar)$, so the existence of a state with
interval confidences $\theta_x\le 1$ at position support $X=[-L/2,L/2]$ and
$\theta_p$ at momentum support $P=[-W/2,W/2]$ is constrained by
\begin{equation}
\theta_p\;\le\;\lambda_{0}\!\left(\frac{LW}{4\hbar}\right),\quad\text{when}\;\theta_x=1.
\label{eq:Slepian_constraint}
\end{equation}
With the symmetric matrix $A$ in \eqref{eq:Amn}, we record an elementary bound
\begin{equation}
\opnorm{A}\;\ge\;\frac{\pi}{\theta_{x}}\!\left[\theta_{p}-2\sqrt{(\theta_{x}+\theta_{p}-1)(1-\theta_{x})}\right],
\label{eq:earlier_bound}
\end{equation}
in Appendix~\ref{app:earlier}, which is valid only when $2\theta_{x}+\theta_{p}>2$.
For general $\theta_x<1$ Theorem~\ref{thm:LP} gives the corresponding
inequality \eqref{eq:LP_bound_I}.

\section{Numerical landscape}\label{sec:numerics}

We now compute the bounds in Theorem~\ref{thm:LP} explicitly. The key
ingredient is the largest Slepian eigenvalue $\lambda_{0}(c)$ which we
evaluate by Gauss--Legendre discretisation of the integral operator $K_c$
(see Appendix~\ref{app:numerical} and Table~\ref{tab:lambda}).

\subsection{Eigenvalue $\lambda_{0}(c)$}

\begin{figure}[!htbp]
\centering
\includegraphics[width=\columnwidth]{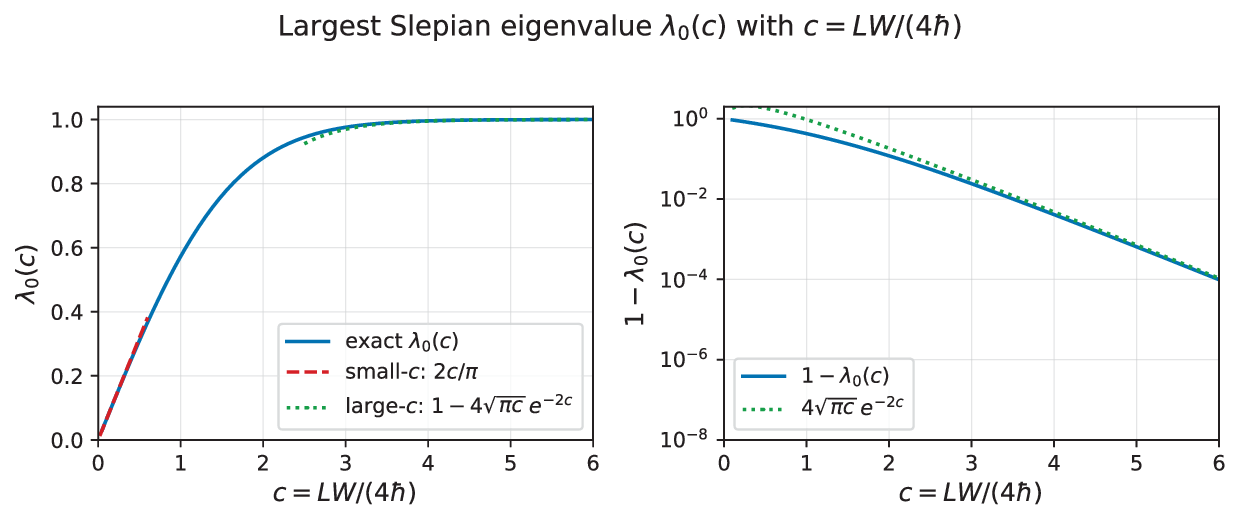}
\caption{Left: the largest Slepian eigenvalue $\lambda_{0}(c)$ on a linear
scale, with the small-$c$ asymptote $2c/\pi$ and the large-$c$ asymptote
$1-4\sqrt{\pi c}\,e^{-2c}$ \cite{Slepian1983}. Right: $1-\lambda_{0}(c)$ on a
log scale showing super-exponential decay. The variable $c=LW/(4\hbar)$
parametrises the time-bandwidth product.}
\label{fig:lambda0}
\end{figure}

\begin{table}[!htbp]
\caption{Selected values of $\lambda_{0}(c)$, computed by 400-node
Gauss--Legendre quadrature of $K_c$.}
\label{tab:lambda}
\centering
\renewcommand{\arraystretch}{1.05}
\begin{tabular}{r r r}
\toprule
$c=LW/(4\hbar)$ & $\lambda_{0}(c)$ & $1-\lambda_{0}(c)$ \\
\midrule
$0.25$  & $0.158$    & $8.42\times10^{-1}$ \\
$0.50$  & $0.310$    & $6.90\times10^{-1}$ \\
$1.00$  & $0.573$    & $4.27\times10^{-1}$ \\
$1.50$  & $0.763$    & $2.37\times10^{-1}$ \\
$2.00$  & $0.881$    & $1.19\times10^{-1}$ \\
$3.00$  & $0.976$    & $2.42\times10^{-2}$ \\
$4.00$  & $0.996$    & $4.11\times10^{-3}$ \\
$5.00$  & $0.9994$   & $6.48\times10^{-4}$ \\
$6.00$  & $0.99990$  & $9.81\times10^{-5}$ \\
$8.00$  & $0.999998$ & $2.13\times10^{-6}$ \\
$10.0$  & $1-4.4\times10^{-8}$ & $4.41\times10^{-8}$ \\
\bottomrule
\end{tabular}
\end{table}

The two asymptotic expansions
\begin{align}
\lambda_{0}(c)&\simeq\frac{2c}{\pi},\qquad c\to0^{+},\label{eq:smallc}\\
1-\lambda_{0}(c)&\simeq 4\sqrt{\pi c}\,e^{-2c},\qquad c\to\infty,\label{eq:largec}
\end{align}
are reported in~\cite{Slepian1983,LandauPollak1961} and are confirmed by our
numerics (Fig.~\ref{fig:lambda0}). They allow simple closed-form
approximations of the inverse $\lambda_0^{-1}(\theta)$ at the extremes:
\begin{equation}
\lambda_0^{-1}(\theta)\simeq\frac{\pi\theta}{2}\;(\theta\!\to\!0),\qquad
\lambda_0^{-1}(\theta)\simeq-\tfrac{1}{2}\ln(1-\theta)\;(\theta\!\to\!1).\label{eq:invasymp}
\end{equation}

\subsection{Tight bound on $\Delta^{I}x\cdot\Delta^{I}p$}

\begin{figure}[!htbp]
\centering
\includegraphics[width=\columnwidth]{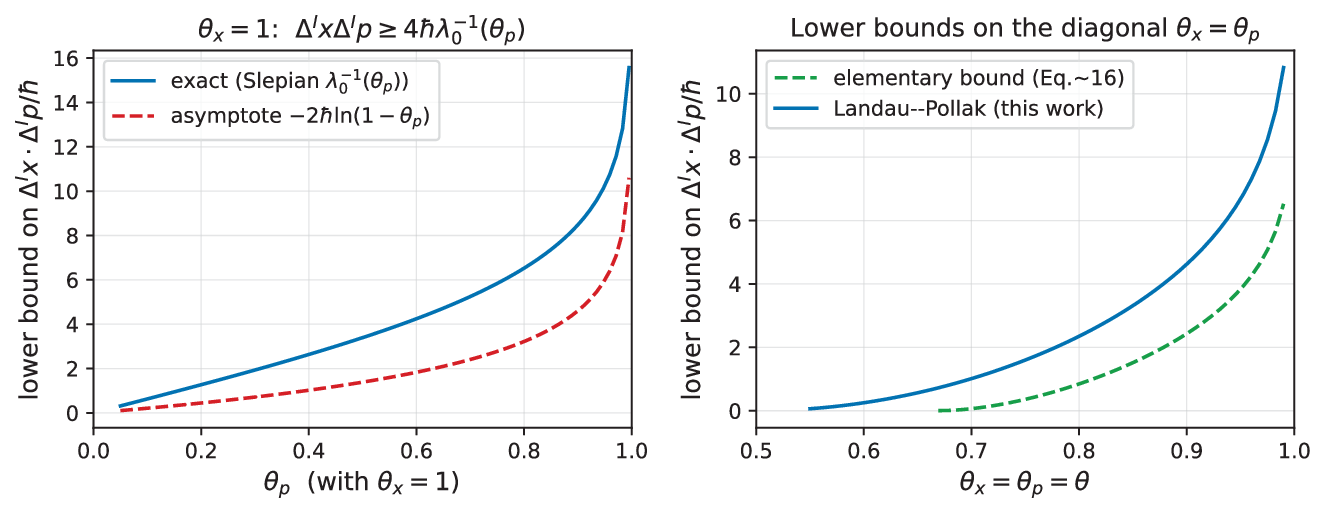}
\caption{Lower bounds on $\Delta^{I}x\,\Delta^{I}p/\hbar$. Left: along
$\theta_x=1$ the implicit Slepian bound
$4\hbar\,\lambda_{0}^{-1}(\theta_p)$ together with the closed-form
high-confidence asymptote $-2\hbar\ln(1-\theta_p)$
[Corollary~\ref{cor:logdiv}]. Right: along the diagonal
$\theta_x=\theta_p=\theta$, the new Landau--Pollak bound
strictly dominates the elementary bound, which is restricted to
$2\theta_x+\theta_p>2$ and degrades sharply when $\theta_x$ recedes from
unity.}
\label{fig:bcomp}
\end{figure}

\begin{table}[!htbp]
\caption{Tight Landau--Pollak lower bound on $\Delta^{I}x\,\Delta^{I}p/\hbar$
at sample $(\theta_x,\theta_p)$. The bound equals four times the inverse
Slepian eigenvalue of the target
$T(\theta_x,\theta_p)=(\sqrt{\theta_x\theta_p}-\sqrt{(1-\theta_x)(1-\theta_p)})^{2}$.}
\label{tab:LP}
\centering
\renewcommand{\arraystretch}{1.05}
\begin{tabular}{c c c c}
\toprule
$\theta_x$ & $\theta_p$ & $T$ & $\Delta^{I}x\,\Delta^{I}p/\hbar$ \\
\midrule
$0.60$ & $0.60$ & $0.040$ & $0.25$ \\
$0.70$ & $0.70$ & $0.160$ & $1.01$ \\
$0.80$ & $0.80$ & $0.360$ & $2.35$ \\
$0.90$ & $0.90$ & $0.640$ & $4.62$ \\
$0.95$ & $0.95$ & $0.810$ & $6.68$ \\
$0.99$ & $0.99$ & $0.9604$ & $10.82$ \\
$0.99$ & $0.50$ & $0.4005$ & $2.64$ \\
$0.95$ & $0.70$ & $0.4803$ & $3.24$ \\
$1.00$ & $0.95$ & $0.9500$ & $10.25$ \\
\bottomrule
\end{tabular}
\end{table}

Figure~\ref{fig:bcomp} compares the Landau--Pollak bound of
Theorem~\ref{thm:LP} against the elementary bound \eqref{eq:earlier_bound}. Both bounds
agree on $\theta_x=1$, but the new one extends to all
$\theta_x+\theta_p>1$ and is consistently strictly tighter on the
diagonal. Table~\ref{tab:LP} samples the tight bound across the
$(\theta_x,\theta_p)$-plane; Fig.~\ref{fig:map} displays the resulting
landscape.

\begin{figure}[!htbp]
\centering
\includegraphics[width=\columnwidth]{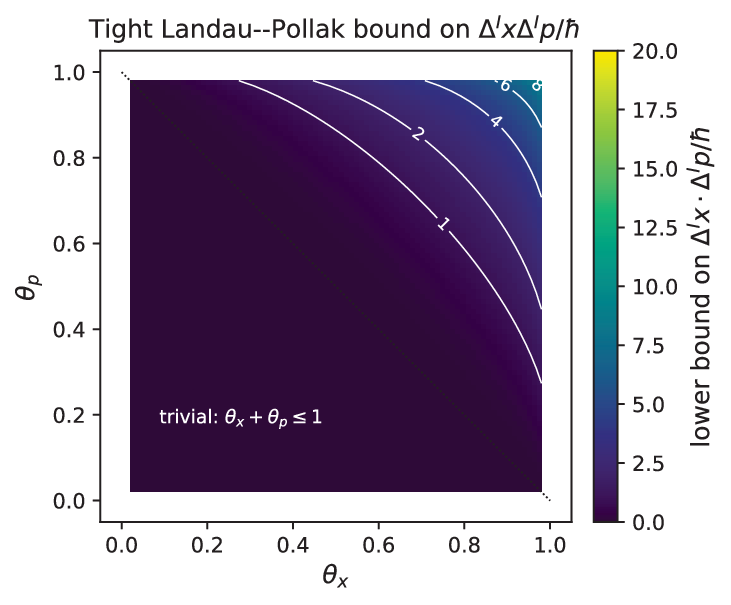}
\caption{Numerical landscape of the tight bound on
$\Delta^{I}x\,\Delta^{I}p/\hbar$ over $(\theta_x,\theta_p)\in[0,1]^{2}$.
The bound diverges as both confidences approach unity; the dashed line
$\theta_x+\theta_p=1$ separates the trivial region (Theorem~\ref{thm:half}).}
\label{fig:map}
\end{figure}

\subsection{Saturating Slepian-superposition state}

\begin{figure}[!htbp]
\centering
\includegraphics[width=0.95\columnwidth]{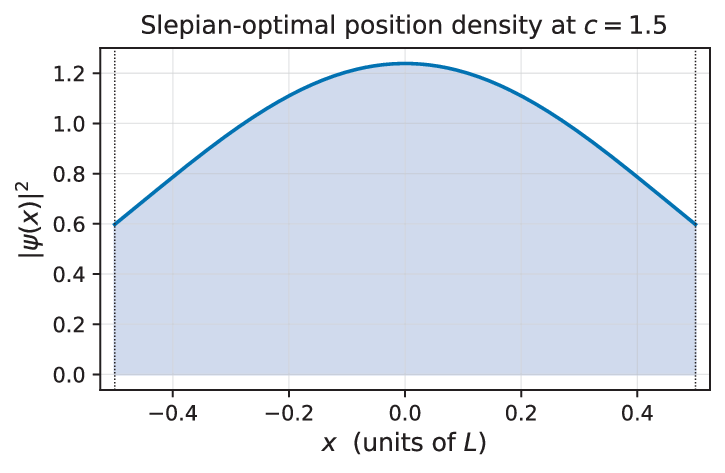}
\\[2pt]
\includegraphics[width=0.95\columnwidth]{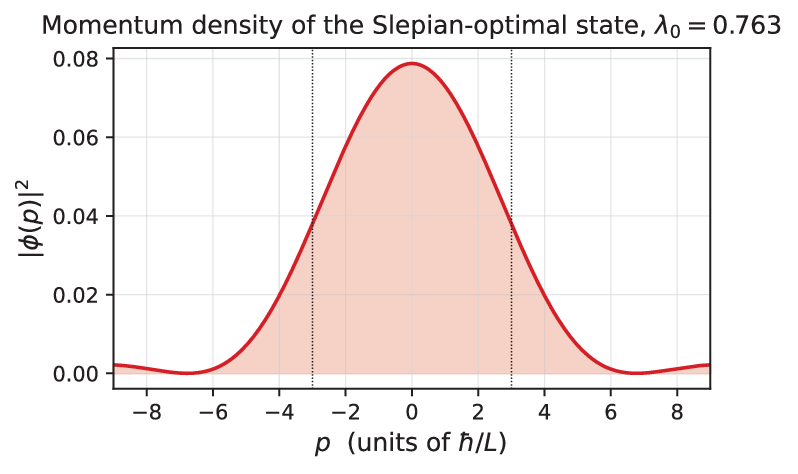}
\caption{Position (top) and momentum (bottom) probability densities of the
state that saturates Theorem~\ref{thm:LP} at $c=LW/(4\hbar)=1.5$, where
$\lambda_0(c)=0.763$. The position density vanishes outside $[-L/2,L/2]$,
while the momentum density concentrates inside $[-W/2,W/2]$ marked by the
dotted lines.}
\label{fig:slepian_states}
\end{figure}

The state saturating \eqref{eq:LP_bound_I_explicit} is the leading
prolate-spheroidal eigenfunction $\psi_{0}^{\mathrm{S}}$. We compute it on
$[-1,1]$ by the same Gauss--Legendre discretisation that supplied
$\lambda_0(c)$, and exhibit its position density together with the Fourier
transform in Fig.~\ref{fig:slepian_states} for $c=1.5$. The momentum density
is sharply concentrated within $[-W/2,W/2]$ as expected; the
$1-\lambda_{0}(c)$ tail mass leaks outside.

\section{Comparison with established uncertainty principles}\label{sec:compare}

We now place Theorem~\ref{thm:LP} in the broader landscape of uncertainty
relations.

\subsection{Heisenberg--Kennard variance bound}

The Heisenberg--Kennard bound \eqref{eq:HK} controls the variance, not the
confidence-level support. For a minimum-uncertainty Gaussian state with
$\sigma_{x}\sigma_{p}=\hbar/2$, the symmetric interval at confidence
$\theta$ has length $\Delta^{I}x(\theta)=2\sqrt{2}\,\sigma_x\,
\mathrm{erf}^{-1}(\theta)$, so that
$\Delta^{I}x(\theta)\Delta^{I}p(\theta)=4\,\mathrm{erf}^{-1}(\theta)^{2}\,\hbar$.
Comparing with the Slepian-saturating state at $\theta_x=\theta_p=\theta$
gives Fig.~\ref{fig:gaussvslepian} and Table~\ref{tab:gauss_v_slepian}: at
high confidence the Slepian state is strictly better, with the ratio
diverging as $\theta\to 1$.

\begin{figure}[!htbp]
\centering
\includegraphics[width=\columnwidth]{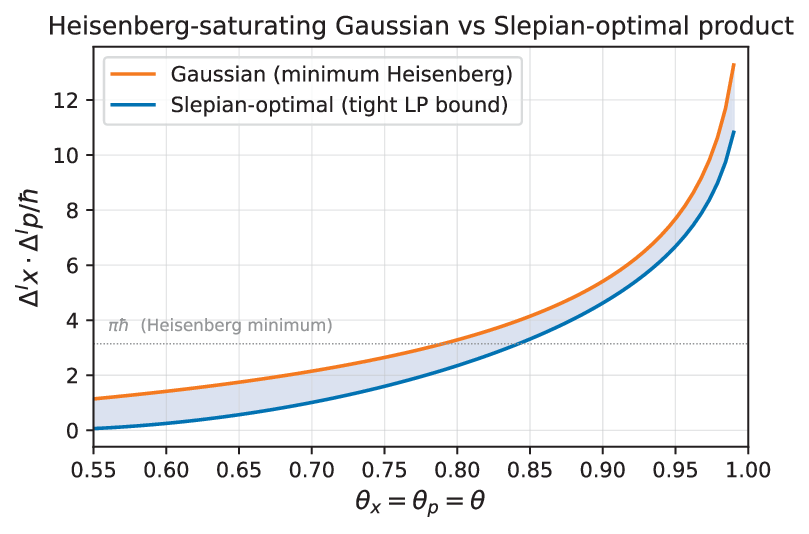}
\caption{Interval-product comparison along $\theta_x=\theta_p=\theta$. The
minimum-uncertainty Gaussian (orange) saturates Heisenberg--Kennard but is
strictly suboptimal for confidence uncertainty when $\theta>1/2$. The
Slepian-superposition state (blue) attains the tight Landau--Pollak bound
of Theorem~\ref{thm:LP}.}
\label{fig:gaussvslepian}
\end{figure}

\begin{table}[!htbp]
\caption{Interval-uncertainty product
$\Delta^{I}x\,\Delta^{I}p/\hbar$ along $\theta_x=\theta_p=\theta$ for the
Heisenberg-saturating Gaussian and for the Slepian-saturating state. The
``ratio'' column is Gaussian / Slepian; values $>1$ mean the Slepian state
has the smaller product.}
\label{tab:gauss_v_slepian}
\centering
\renewcommand{\arraystretch}{1.05}
\begin{tabular}{c c c c}
\toprule
$\theta$ & Gaussian & Slepian & ratio \\
\midrule
$0.55$ & $1.14$ & $0.063$ & $18.16$ \\
$0.60$ & $1.42$ & $0.251$ & $5.63$ \\
$0.70$ & $2.15$ & $1.013$ & $2.12$ \\
$0.80$ & $3.28$ & $2.349$ & $1.40$ \\
$0.90$ & $5.41$ & $4.622$ & $1.17$ \\
$0.95$ & $7.68$ & $6.679$ & $1.15$ \\
$0.99$ & $13.27$ & $10.82$ & $1.23$ \\
\bottomrule
\end{tabular}
\end{table}

The Slepian state strictly dominates the minimum-variance Gaussian across the
entire confidence range. The improvement is most dramatic at low confidence
($\theta\sim 0.55$) where the Slepian state achieves
$\Delta^{I}x\,\Delta^{I}p\sim 0.06\hbar$, an order of magnitude below
$\hbar/2$. This is not in conflict with Heisenberg--Kennard: the variance
$\sigma_x\sigma_p$ of the Slepian state is large (its compact position
support carries broad momentum tails), while its central interval product is
small. Variance and confidence are genuinely different uncertainty notions,
the quantitative counterpart of the standard observation that
Heisenberg--Kennard fails to control sharp bimodal
distributions~\cite{Coles2017}.

\subsection{Entropic uncertainty}

The Bia\l{}ynicki-Birula--Mycielski (BBM)
inequality~\cite{BialynickiBirula1975,Beckner1975} reads
\begin{equation}
h(x)+h(p)\ge\log(\pi e\hbar),\label{eq:BBM}
\end{equation}
with $h(x)=-\int|\psi(x)|^{2}\log|\psi(x)|^{2}\,\Dff x$ the differential
entropy. The BBM bound and the discrete Maassen--Uffink
relation~\cite{Maassen1988} control the differential or Shannon entropy of
the distributions, which is a function of the full distribution rather than
of the probability mass concentrated in a small set. Concretely, a state
that places probability $\theta$ on a thin spike of width $\varepsilon$ and
the remaining probability on a wide smooth tail can simultaneously have
$h(x)$ large (driven by the tail) and $\Delta^{c}x(\theta)\sim\varepsilon$
small. Theorem~\ref{thm:LP} therefore captures information that the entropic
relations do not, especially in the high-confidence regime $\theta\to 1$
where it gives the explicit logarithmic divergence \eqref{eq:asymp_log}.
Conversely, BBM gives a bound on the Shannon entropies that
Theorem~\ref{thm:LP} cannot reproduce. The two relations are
complementary, not comparable.

\subsection{Donoho--Stark concentration}

Donoho and Stark~\cite{DonohoStark1989} prove that for any non-zero $\psi$
$\epsilon_{T}$-concentrated on a set $T$ in position and
$\epsilon_{W}$-concentrated on a set $W$ in momentum,
\begin{equation}
|T|\,|W|\;\ge\;2\pi\hbar\bigl(1-\epsilon_{T}-\epsilon_{W}\bigr)^{2}_{+}.\label{eq:DS}
\end{equation}
Re-expressing the concentration parameters in our notation
($\epsilon_{T}^{2}=1-\theta_{x}$, $\epsilon_{W}^{2}=1-\theta_{p}$),
\eqref{eq:DS} becomes
\begin{equation}
\Delta^{c}x\,\Delta^{c}p\;\ge\;2\pi\hbar\bigl(1-\sqrt{1-\theta_{x}}-\sqrt{1-\theta_{p}}\bigr)^{2}_{+}.
\label{eq:DSour}
\end{equation}
Our bound \eqref{eq:LP_bound} replaces the linear combination
$1-\sqrt{1-\theta_x}-\sqrt{1-\theta_p}$ by the form
$\sqrt{\theta_x\theta_p}-\sqrt{(1-\theta_x)(1-\theta_p)}$ which is strictly
larger throughout $\theta_x+\theta_p>1$. For instance at
$\theta_x=\theta_p=0.9$:
\eqref{eq:DSour} gives $\Delta^{c}x\,\Delta^{c}p\ge 0.85\hbar$, while
\eqref{eq:LP_bound} gives $\Delta^{c}x\,\Delta^{c}p\ge 4.02\hbar$, a
$4.7\!\times$ improvement.

\subsection{Landau--Pollak inequality and its operator form}

For a single position interval and a single momentum interval, the
Landau--Pollak inequality~\cite{LandauPollak1961} gives the angular form
\begin{equation}
\arccos\sqrt{\theta_{x}}+\arccos\sqrt{\theta_{p}}\;\ge\;\arccos\!\sqrt{\lambda_{0}\!\left(\frac{|T||W|}{4\hbar}\right)},\label{eq:LPangular}
\end{equation}
which is exactly \eqref{eq:LP_bound_I}. The operator-theoretic version of
the same inequality is Lenard's two-projection
theorem~\cite{Lenard1972}. Theorem~\ref{thm:LP} extracts the position-momentum
content of these inequalities, restated in the language of confidence
uncertainty, and supplies the explicit
high-confidence asymptote \eqref{eq:asymp_log} that the operator-theoretic
form does not directly disclose.

\section{Discussion and limitations}\label{sec:disc}

We summarise the qualitative picture, then list open questions.

\paragraph*{Picture.}
For $(\theta_x,\theta_p)$ \emph{below} the antidiagonal we can achieve
arbitrarily small $\Delta^{c}x$ and $\Delta^{c}p$ simultaneously
(Theorem~\ref{thm:half}); above the antidiagonal a strict positive lower
bound holds (Theorem~\ref{thm:LP}), saturated by Slepian states for the
interval version. Our numerical evaluation of $\lambda_0(c)$ converts the
implicit bound into explicit values across the entire upper triangle
(Sec.~\ref{sec:numerics}); the bound is strictly tighter than the elementary
bound \eqref{eq:earlier_bound} and dominates the Donoho--Stark concentration
bound by factors of $4$--$5$ at $\theta_{x},\theta_{p}\sim 0.9$.

\paragraph*{Tightness.}
\eqref{eq:LP_bound_I} is tight: it is saturated by the principal
prolate-spheroidal eigenfunction. In the union-of-intervals confidence
setting \eqref{eq:LP_bound} we use the Donoho--Stark Hilbert--Schmidt bound
$\opnorm{P_{X}Q_{P}}^{2}\le\frac{|X||P|}{2\pi\hbar}$, which is generally not
tight for unions of intervals; closing the gap to a tight bound is an open
problem (cf.~\cite{Nazarov2008}).

\paragraph*{Mixed states.}
For mixed states $\rho$ the same projection-and-Hilbert-Schmidt argument
applies provided $\theta_{\bullet}=\mathrm{tr}(P_{\bullet}\rho)$, leaving
\eqref{eq:LP_bound} valid as stated. The interval bound
\eqref{eq:LP_bound_I_explicit} also extends to mixed states; the saturating
state is pure (the principal prolate-spheroidal eigenfunction).

\paragraph*{Multidimensional generalisation.}
The Slepian theory extends to $\mathbb{R}^{d}$ via the multidimensional
prolate-spheroidal wave functions~\cite{Slepian1983,Osipov2013}, opening a
direct $d$-dimensional analogue of Theorem~\ref{thm:LP}. Time-energy and
number-phase confidence uncertainties can be formulated similarly using the
appropriate one-parameter unitary group.

\paragraph*{Experimental relevance.}
The $50\%$ joint-localisation result of Theorem~\ref{thm:half} is
counter-intuitive but consistent with the standard quantum
formalism: it reflects the freedom to design states with two narrow
support sets, one in position and one in momentum, sharing the same
overall norm. Realising such states experimentally is an open
question; the Slepian-superposition states required for the saturation of
\eqref{eq:LP_bound_I} are accessible numerically and ought to be amenable
to digital quantum simulation.

\section{Conclusion}\label{sec:conclu}

We have unified the two complementary uncertainty notions of confidence
uncertainty $\Delta^{c}$ and interval confidence uncertainty $\Delta^{I}$
and established the complete two-region phase diagram on
$(\theta_x,\theta_p)\in[0,1]^{2}$. Below the antidiagonal,
$\theta_x+\theta_p\le 1$, position and momentum can be localised
arbitrarily well simultaneously; above it the new bound
$\Delta^{c}x\,\Delta^{c}p\ge2\pi\hbar(\sqrt{\theta_x\theta_p}-\sqrt{(1-\theta_x)(1-\theta_p)})^2$
holds, with a sharp Slepian-eigenvalue version for the interval form. The
numerical evaluation of $\lambda_{0}(c)$ converts these implicit statements
into explicit, quantitative bounds, and the comparison with the
Heisenberg--Kennard, Bia\l{}ynicki-Birula--Mycielski, and Donoho--Stark
inequalities clarifies in which regimes the new bound is tighter. We
expect the framework to be useful for experimental questions in which what
matters is the probability concentrated on a small support, rather than the
variance of the distribution.

\begin{acknowledgments}
We gratefully acknowledge Sixia Yu for valuable discussions.
This work is supported by the National Natural Science Foundation of China
(Grant No.~12475020), the National Key Research and
Development Program of China (Grant No.~2023YFC2205802), and the Innovation
Program for Quantum Science and Technology (Grant No.~2021ZD0301701).
\end{acknowledgments}



\appendix

\section{Proof of Theorem~\ref{thm:LP}}\label{app:proofs}\label{app:LP}

Let $X,P\subseteq\mathbb{R}$ be measurable sets, $|X|<\infty$, $|P|<\infty$.
Write $P_{X}$ for the multiplication operator $f(x)\mapsto\mathbf{1}_{X}(x)f(x)$ on
$L^{2}(\mathbb{R})$, and $Q_{P}=\mathcal{F}^{-1}\mathbf{1}_{P}\mathcal{F}$ for
the corresponding spectral projection in momentum, with the Fourier
transform $\mathcal{F}\psi(p)=(2\pi\hbar)^{-1/2}\int e^{-ipx/\hbar}\psi(x)\,\Dff x$.
The kernel of $P_{X}Q_{P}$ in the position representation is
$K(x,y)=\mathbf{1}_{X}(x)\frac{1}{2\pi\hbar}\int_{P}e^{ip(x-y)/\hbar}\Dff p$,
hence
\begin{equation}
\HSnorm{P_{X}Q_{P}}^{2}=\frac{|X||P|}{2\pi\hbar},\qquad\opnorm{P_{X}Q_{P}}\le\HSnorm{P_{X}Q_{P}}.\label{eq:HS}
\end{equation}
This is the Donoho--Stark Theorem~1~\cite{DonohoStark1989}.

For any unit vector $\psi$, Lenard's projection
inequality~\cite{Lenard1972} states
\begin{equation}
\arccos\|P_{X}\psi\|+\arccos\|Q_{P}\psi\|\;\ge\;\arccos\opnorm{P_{X}Q_{P}},\label{eq:Lenard}
\end{equation}
where the right-hand side is the principal angle between the ranges of $P_{X}$
and $Q_{P}$. If $\|P_{X}\psi\|^{2}\ge\theta_{x}$ and
$\|Q_{P}\psi\|^{2}\ge\theta_{p}$ then since $\arccos$ is decreasing
\begin{equation}
\arccos\sqrt{\theta_{x}}+\arccos\sqrt{\theta_{p}}\;\ge\;\arccos\opnorm{P_{X}Q_{P}}.\label{eq:Lenard2}
\end{equation}
For $\theta_{x}+\theta_{p}>1$ both sides lie in $[0,\pi/2]$, so taking
cosine and expanding via
$\cos(\alpha+\beta)=\cos\alpha\cos\beta-\sin\alpha\sin\beta$ yields
\begin{equation}
\sqrt{\theta_{x}\theta_{p}}-\sqrt{(1-\theta_{x})(1-\theta_{p})}\;\le\;\opnorm{P_{X}Q_{P}}.\label{eq:opLB}
\end{equation}
Squaring and combining with \eqref{eq:HS},
\begin{equation}
\bigl(\sqrt{\theta_{x}\theta_{p}}-\sqrt{(1-\theta_{x})(1-\theta_{p})}\bigr)^{2}\le\frac{|X||P|}{2\pi\hbar}.\label{eq:LP_combined}
\end{equation}
Now \eqref{eq:LP_combined} holds for \emph{every} pair $(X,P)$ realising the
confidence levels. Taking the infimum over all such pairs gives
\eqref{eq:LP_bound}.

For the interval version, $X=[x_1,x_2]$ and $P=[p_1,p_2]$ are single
intervals. The Slepian--Pollak identity~\cite{SlepianPollak1961} states
\begin{equation}
\opnorm{P_{X}Q_{P}}^{2}=\lambda_{0}\!\left(\frac{|X||P|}{4\hbar}\right),\label{eq:SP}
\end{equation}
so \eqref{eq:opLB} promotes to
\begin{equation}
(\sqrt{\theta_{x}\theta_{p}}-\sqrt{(1-\theta_{x})(1-\theta_{p})})^{2}\le\lambda_{0}\!\left(\frac{|X||P|}{4\hbar}\right),
\end{equation}
which is \eqref{eq:LP_bound_I}. Saturation is obtained at the
$\psi=\sqrt{\theta_{x}}\psi_{0}^{\mathrm{S}}+\sqrt{1-\theta_{x}}\psi_{0}^{\mathrm{S},\perp}$
linear combination, where $\psi_{0}^{\mathrm{S}}$ is the leading
prolate-spheroidal eigenfunction; this is the standard Lenard-equality
configuration, see~\cite{Lenard1972} and the recent
discussion~\cite{FollandSitaram1997}.\hfill$\square$

\section{Proof of Corollary~\ref{cor:logdiv}}\label{app:cor}

For $\theta_x=1$ the angular target reduces to
$T(1,\theta_p)=\theta_p$. Hence $\Delta^{I}x\,\Delta^{I}p/\hbar\ge
4\,\lambda_{0}^{-1}(\theta_p)$. By \eqref{eq:largec}
$1-\lambda_{0}(c)\sim 4\sqrt{\pi c}\,e^{-2c}$ as $c\to\infty$, so
$\ln(1-\theta_p)\sim\ln(4\sqrt{\pi c})-2c$, equivalently
$c\sim-\tfrac{1}{2}\ln(1-\theta_p)+\tfrac{1}{4}\ln(\pi c)+\ln 2$, of which
the leading term is $c\sim-\tfrac{1}{2}\ln(1-\theta_p)$. Multiplying by $4$
gives \eqref{eq:asymp_log}.\hfill$\square$

\section{Numerical recipe for $\lambda_{0}(c)$}\label{app:numerical}

We discretise the Slepian operator $K_c$ on $L^2[-1,1]$ by
Gauss--Legendre quadrature: pick nodes $\{u_i\}_{i=1}^{N}$ and weights
$\{w_i\}_{i=1}^{N}$, build the symmetric matrix
\begin{equation}
M_{ij}=\sqrt{w_iw_j}\,\frac{\sin c(u_i-u_j)}{\pi(u_i-u_j)}\quad(\text{taking }c/\pi\text{ at }u_i=u_j),
\end{equation}
and read off $\lambda_{0}(c)$ as the largest eigenvalue of $M$. Convergence
is spectral in $N$; we use $N=400$, comfortably resolving $1-\lambda_{0}(c)$
down to $10^{-8}$. The leading eigenvector returns the principal Slepian
function $\psi_0^{\mathrm{S}}$ on the quadrature nodes, which we use to
generate Fig.~\ref{fig:slepian_states}.

\section{An elementary bound implied by Theorem~\ref{thm:LP}}\label{app:earlier}

For completeness we record an elementary bound which can be derived without
appeal to Lenard's inequality. It is the bound that appeared in our earlier
analysis and is a special case of \eqref{eq:LP_bound_I} valid only for
$2\theta_{x}+\theta_{p}>2$. Decompose the state, with respect to a candidate
position interval $[-L/2,L/2]$ that captures probability $\theta_{x}$, as
\begin{equation}
\psi=\sqrt{p_{-}}\psi_{-}+\sqrt{\theta_{x}}\psi_{0}+\sqrt{p_{+}}\psi_{+},
\end{equation}
where $\psi_{0}$ is supported in $[-L/2,L/2]$, $\psi_{+}$ in
$[L/2,L_{+}+L/2]$, $\psi_{-}$ in $[-L_{-}-L/2,-L/2]$, all normalised, and
$p_{-}+p_{+}+\theta_{x}=1$. Writing the momentum probability in the band
$[-W/2,W/2]$ as $P_{W}=\int_{-W/2}^{W/2}|\phi|^{2}\,\Dff p$, expanding the
cross-terms by Cauchy--Schwarz, and using
$P_{W}(\psi_{0})\le\opnorm{A(LW)}/\pi$ together with
$P_{W}(\psi_{\pm})\le\opnorm{A(L_{\pm}W)}/\pi\le 1$ gives
\begin{equation}
\begin{aligned}
P_{W}(\psi)&\le\theta_{x}\frac{\opnorm{A(LW)}}{\pi}+p_{+}+p_{-}\\
&\quad+2\sqrt{\theta_{x}p_{+}}\sqrt{\frac{\opnorm{A(LW)}}{\pi}}+\ldots
\end{aligned}
\end{equation}
Imposing $P_{W}(\psi)\ge\theta_{p}$ and $p_{+}+p_{-}=1-\theta_{x}$ and
optimising in $p_{\pm}$ gives, when $2\theta_{x}+\theta_{p}>2$,
\begin{equation}
\opnorm{A}\Big|_{LW=\Delta^{I}\!x\Delta^{I}\!p}\;\ge\;\frac{\pi}{\theta_{x}}\!\left[\theta_{p}-2\sqrt{(\theta_{x}+\theta_{p}-1)(1-\theta_{x})}\right].\label{eq:elementary}
\end{equation}
The Landau--Pollak result \eqref{eq:LP_bound_I} is strictly tighter on its
domain $\theta_{x}+\theta_{p}>1$, the two bounds agreeing only on the
boundary $\theta_{x}=1$.\hfill$\square$

\end{document}